\pdfoutput=1
\documentclass[journal]{IEEEtran}

\usepackage[T1]{fontenc}
\usepackage{amsmath,amssymb,amsthm}
\usepackage{booktabs}
\usepackage{cite}
\usepackage{microtype}
\usepackage{tikz}
\usepackage{pgfplots}
\pgfplotsset{compat=1.17}
\usetikzlibrary{arrows.meta,decorations.pathreplacing,positioning}
\usepackage[colorlinks=true,linkcolor=black,citecolor=black,urlcolor=blue]{hyperref}

\newtheorem{theorem}{Theorem}
\newtheorem{corollary}{Corollary}
\newtheorem{proposition}{Proposition}
\theoremstyle{remark}

\newcommand{\Ab}{\bar{A}}
\newcommand{\Aone}{A_{1,1}}
\newcommand{\Atwo}{A_{2,1}}
\newcommand{\Rel}{\operatorname{Re}}
\newcommand{\Iml}{\operatorname{Im}}

\begin{document}

\title{Time Invariance, Circle Symmetry, and the\\ Completeness of the Cardiff Behavioral Model}

\author{Nicholas~B.~Tufillaro%
\thanks{N.~B.~Tufillaro is with Aqualytics, Corvallis, OR, USA (e-mail:
nick@aqualytics.eco).}%
\thanks{Preprint, 22 September 2026. To be submitted to \emph{IEEE Trans. Circuits
Syst.~I}.}}

\markboth{Preprint --- to be submitted to IEEE Transactions on Circuits and Systems I}{Tufillaro: Time invariance, circle symmetry, and the Cardiff model}

\maketitle

\begin{abstract}
The functional form shared by frequency-domain behavioral models of nonlinear microwave
devices (the Cardiff model, X-parameters, higher-order sinusoidal describing functions) is
usually presented as a modeling choice; here it is shown to follow from time invariance. Under a shift of
the time origin the $k$-th harmonic phasor rotates by $k$ times the fundamental's angle,
so the spectral map of a time-invariant device is equivariant under a weighted action of
the circle group, and classical invariant theory gives the general form of every such map; the three communities'
phase-normalization factors are its weight-carrying factor. The result is
a completeness theorem: in single-tone periodic steady state no time-invariant two-port
response lies outside the Cardiff form, and the index relation $m=|n|+2r$
is the smoothness condition at zero load wave. Writing each Cardiff term as a monomial
$A^{a}\bar A^{b}$ in the load wave gives the exponents physical interpretations --- $m$ is
the order of the load-side nonlinearity, $n$ set by the drive-side harmonic,
$r=\min(a,b)$ the number of conjugate pairs --- and the bound $r_{\max}=\lfloor K/2\rfloor$
for load-side polynomial degree $K$, so the familiar restriction $r\le1$ is exact
for a cubic load-side nonlinearity and fails at fourth order. For a loaded device the
bound becomes a measurable decay in $r$. Tailored A-pull
measurement displays the decomposition directly, and two
simulations reproduce its published pattern of detected terms and place the first
fourth-order term near $-58$\,dBc, between that measurement's $-40$\,dBc spurious
floor and its $-60$\,dBc noise floor.
\end{abstract}

\begin{IEEEkeywords}
Behavioral modeling, Cardiff model, X-parameters, invariant theory, equivariance,
load-pull, tailored A-pull, GaN HEMT.
\end{IEEEkeywords}

\section{Introduction}

\IEEEPARstart{F}{requency-domain} behavioral models describe a nonlinear device by the
map from the incident-wave phasors at its ports to the scattered-wave phasors on a
harmonic grid. The Cardiff model~\cite{qi2009,woodington2010,azad2022}, the X-parameter
framework~\cite{root2005phd,verspecht2006phd,root2013} and the higher-order sinusoidal
input describing functions (HOSIDF) of the mechanical-systems
literature~\cite{nuij2006} all write this map in the same way: a phase factor
attached to the fundamental, multiplying an expansion in the magnitudes of the incident
waves and their relative phases. Each community arrived at the form independently and
motivated it on physical grounds --- Root \emph{et al.} call the phase factor and the magnitude-only dependence of the
coefficients ``a necessary consequence of the assumed time invariance of the underlying
system''~\cite{root2005phd}; Woodington \emph{et al.}
observe that the device ``responds only to the magnitudes of the signals \ldots\ and the
relative phase difference''~\cite{woodington2010}; Nuij \emph{et al.} introduce a
``virtual harmonics generator''~\cite{nuij2006}. Here we develop these observations
into an invariant, geometric account of the origin of the model form and of its exponents.

Three consequences follow from the geometric point of view.

\emph{The form is complete.} Time invariance is the statement that the device response
is unchanged by a shift of the time origin. On the harmonic grid that shift is a
rotation of every phasor by an angle proportional to its harmonic index, i.e.\ an
action of the circle group $U(1)$ with integer weights, and the spectral map is
\emph{equivariant} under it. The invariants and equivariants of a compact group acting
linearly are classical objects, and their structure --- finitely many generators, with
Schwarz's theorem~\cite{schwarz1975} extending the description from polynomials to
smooth functions --- gives the general solution (Theorem~\ref{thm:complete}). For a
two-port in single-tone periodic steady state the Cardiff form is that solution, so a
residual that a CW fit cannot remove is another tone, another port, harmonic injection, or a measurement that is not in steady
state. The index relation $m=|n|+2r$ of the Cardiff model is not a convention but the
condition that the response be smooth at zero load wave.

\emph{The exponents acquire physical interpretations.} Writing a Cardiff term $|A|^{m}(\angle A)^{n}$
as a monomial $A^{a}\Ab^{b}$ in the load-side wave $A$ gives $m=a+b$, $n=a-b$ and
$r=\min(a,b)$: the magnitude exponent is the order of the load-side nonlinearity
that produced the term, the phase exponent is set by the drive-side harmonic,
and $r$ counts conjugate pairs (Section~\ref{sec:exponents}). From this,
$r_{\max}=\lfloor K/2\rfloor$ for a device whose dependence on the load-side voltage
is a polynomial of degree $K$ (Theorem~\ref{thm:rmax}), and the ``mixing order''
tabulated in recent identification work~\cite{alrawachy2025} is $m+|n-h|$, the least
total nonlinearity order able to produce the term.

\emph{The restriction $r\le1$ can be interpreted as a property of the device.} It is exact when the
load-side nonlinearity is at most cubic and fails generically at fourth order, where
the first new term is $|A|^{4}$. For a device closed through its load every order is
present and the restriction becomes a truncation governed by a decay of the
coefficients in $r$; the decay rate is measurable from the fit and gives a stopping
rule (Section~\ref{sec:loaded}). The tailored A-pull identification of Tasker and
co-workers~\cite{alrawachy2025,tasker2020}, which modulates the load-side wave and reads
the exponent pairs off a spectrum, measures the decomposition directly, and its
published data on a GaN device~\cite{alrawachy2025} --- nine exponent pairs, all with
$r\le1$, none with $m\ge4$ --- is what Theorem~\ref{thm:rmax} predicts for effective load-side
order three (Section~\ref{sec:apull}). Two simulations, a polynomial device and the
ASM-HEMT GaN compact model~\cite{khandelwal2018asmhemt} in ngspice, reproduce the
pattern and place the first fourth-order term, $(m,n)=(4,0)$, between the spurious
and noise floors of that measurement (Section~\ref{sec:sims}).

The invariant and equivariant
structure of weighted circle and torus actions is textbook material; the standard
reference is Golubitsky, Stewart and Schaeffer~\cite[Ch.~XII and~XVI]{gss1988}, and the
finiteness theorem is in Weyl's own
lectures~\cite{weyl1936}. Coordinate-free descriptions of nonlinear input--output maps
exist --- Chen--Fliess series~\cite{brockett1976}, port-Hamiltonian
interconnection~\cite{vanderschaft2014}, the multi-time envelope formulation that
places the \emph{signal} on a torus~\cite{roychowdhury2001} --- and the linearization
of the spectral map about a large-signal drive is a harmonic transfer matrix in the
sense of the linear time-periodic literature~\cite{wereley1990,sandberg2005}. What is
offered here is the dictionary between invariant theory and the behavioral-model
literature, the completeness statement the latter lacks, the interpretations of the exponents,
the bound on $r$, and two simulations illustrating the results.
Earlier work on behavioral modeling from the dynamical-systems
side~\cite{root2003dac,wood2004,wood2005chapter} concerned memory and time-domain
embedding; the present paper is confined to single-tone periodic steady state, where memory
does not enter, and memory under modulation is treated in a companion
paper~\cite{tufillaro2026memory}.

\section{Time Invariance as a Circle Action}
\label{sec:symmetry}

Consider a two-port in periodic steady state at fundamental $\omega_0$. The incident
wave at port $p$ is $a_p(t)=\Rel\sum_{k\ge0}A_{p,k}e^{jk\omega_0 t}$ and the scattered
wave $b_p(t)=\Rel\sum_{h\ge0}B_{p,h}e^{jh\omega_0t}$; waves are power-normalized so that
$|A_{p,k}|^{2}$ is a power. A time-invariant device in periodic steady state defines a map
$B=F(A)$ from the incident phasors to the scattered ones.

Start the clock at time $\tau$, and consider a later time $t=t'+\tau$, which the shifted
clock reads as $t'$. The waveforms, the device and the experiment are unchanged, but the
phasors that describe the same waveforms in terms of $t'$ are multiplied by phase factors
that depend on the harmonic index:
\begin{equation}
  A_{p,k}\longmapsto A_{p,k}e^{jk\theta},\qquad B_{p,h}\longmapsto B_{p,h}e^{jh\theta},
  \qquad \theta=\omega_0\tau .
  \label{eq:action}
\end{equation}
The $k$-th harmonic rotates $k$ times as fast as the fundamental, and a shift by one
period leaves every phasor unchanged, so $\theta$ is an angle on the circle. The transformations
\eqref{eq:action} compose by adding angles: they form the group
$U(1)=\{e^{j\theta}\}$, the simplest compact Lie group, acting on the phasors with
integer \emph{weights} $k$ and $h$ (Fig.~\ref{fig:weights}); the phasor space thus
carries a representation of $U(1)$ that is a direct sum of one-dimensional
representations, the characters $e^{jk\theta}$, and the harmonic index is the label of
the irreducible representation. Time invariance is the
statement that $F$ commutes with this action,
\begin{equation}
  F\bigl(\{A_{p,k}e^{jk\theta}\}\bigr)_{p,h} = e^{jh\theta}F(\{A_{p,k}\})_{p,h},
  \label{eq:equiv}
\end{equation}
i.e.\ $F$ is \emph{equivariant}: the output of weight $h$ rotates at weight $h$ when
the inputs rotate at their weights. Equation~\eqref{eq:equiv} is the whole physical
input to Section~\ref{sec:complete}.

\begin{figure}[!t]
\centering
\begin{tikzpicture}[scale=0.72]
  \foreach \k/\x in {1/0, 2/3.6, 3/7.2}{
    \begin{scope}[shift={(\x,0)}]
      \draw[black!25] (-1.5,0)--(1.5,0);
      \draw[black!25] (0,-1.5)--(0,1.5);
      \draw[black!55] (0,0) circle (1.2);
      \draw[-{Stealth[length=2.2mm]},thick] (0,0)--(25:1.2);
      \node at (25:1.5) {\footnotesize $A_{p,\k}$};
      \pgfmathsetmacro\ang{25+\k*38}
      \draw[-{Stealth[length=2.2mm]},thick,black!55,dashed] (0,0)--(\ang:1.2);
      \draw[-{Stealth[length=1.8mm]},black!70] (25:0.72) arc (25:\ang:0.72);
      \node at (0,-1.85) {\footnotesize $k=\k$: rotates by $\k\theta$};
    \end{scope}}
\end{tikzpicture}
\caption{Starting the clock at time $\tau$ rotates the $k$-th harmonic phasor by
$k\theta$, $\theta=\omega_0\tau$: all harmonics rotate together, at rates set by their
harmonic index, which is their weight under the circle action.}
\label{fig:weights}
\end{figure}

The time origin is not the only arbitrary choice. Moving the reference plane at port
$p$ along a dispersionless line rotates every wave at that port, incident and reflected
in opposite senses and harmonic $k$ by $k$ times the fundamental's angle
\cite{kurokawa1965}; incommensurate drive tones each carry their own phase origin. The
full group acting on a $P$-port on one harmonic grid is therefore a torus
$T^{P+1}$, and the machinery below applies to the torus verbatim. The Cardiff model at fixed
drive concerns one harmonic grid, and only a single circle --- one tone together with its
harmonics --- is used here. Two incommensurate tones are represented by a two-torus,
$T^{2}=U(1)\times U(1)$, one circle per tone, and $N$ incommensurate tones by the
$N$-torus; commensurate tones share a
fundamental and reduce to a single circle with the appropriate integer weights.

\section{Invariants, Equivariants and Completeness}
\label{sec:complete}

\subsection{Invariants of the circle action}

A function of the phasors is \emph{invariant} if it is unchanged by every element of the
group. For the two fundamental incident waves $\Aone$ (drive side) and $\Atwo$ (load
side), both of weight one, two kinds of quantity survive (i.e.\ are invariant): the magnitudes $|\Aone|$,
$|\Atwo|$, and the weight-balanced products such as $\Atwo\Ab_{1,1}$, whose phase factors
cancel. In general the monomial $\prod A_{p,k}^{\alpha_{p,k}}\Ab_{p,k}^{\beta_{p,k}}$ is
invariant exactly when its total weight $\sum k(\alpha_{p,k}-\beta_{p,k})$ vanishes.
Dividing the balanced product
by the magnitudes leaves the unit phasor of the relative phase,
\begin{equation}
  Q\equiv\frac{\Atwo/|\Atwo|}{\Aone/|\Aone|}=e^{j(\phi_{2,1}-\phi_{1,1})},
  \qquad \phi_{p,k}=\angle A_{p,k},
\end{equation}
which is the object the Cardiff literature writes as $\angle\Atwo/\angle\Aone$. (In that
literature $\angle A$ denotes the unit phasor $e^{j\angle A}$, not the angle; the
notation is kept here.)

Two classical results guarantee that the magnitudes and the relative phase are the
only invariants. For a compact group acting linearly, the ring of invariant polynomials
is finitely generated (Hilbert; Weyl~\cite{weyl1936}): there is a finite set of invariant
polynomials, a Hilbert basis, such that every invariant polynomial is a polynomial in the
basis elements. For the circle the Hilbert basis is read off the weight condition, and for the two
fundamental incident waves it consists of $|\Aone|^{2}$, $|\Atwo|^{2}$,
$\Rel(\Atwo\Ab_{1,1})$ and $\Iml(\Atwo\Ab_{1,1})$, the last two being the real and
imaginary parts of the one weight-balanced product, subject to the single relation
$\Rel^{2}+\Iml^{2}=|\Aone|^{2}|\Atwo|^{2}$. The constructive form of the statement
is one engineers already use: averaging any function over the drive phase $\theta$
projects it onto the invariants --- the Reynolds operator of invariant theory is, for
$U(1)$, phase averaging over one period --- and compactness of the group is what makes
the average exist. Schwarz's theorem~\cite{schwarz1975} extends the result from
polynomials to smooth functions: every smooth invariant is a smooth function of the same
generators. This is the step that turns a statement about polynomial fits into a
statement about measured devices.

\subsection{Equivariants and the completeness theorem}

The response $B_{p,h}$ is not invariant; it has weight $h$. Equivariants are described
by a standard construction. Pick any weight-$h$ quantity, a \emph{carrier}, and divide the
output by the carrier; the quotient has weight zero and is therefore an invariant, hence
a smooth function of the Hilbert basis. The natural carrier is the drive-side phase,
$P\equiv e^{j\phi_{1,1}}=\Aone/|\Aone|$, of weight one. Multiplying the invariant
quotient back by the carrier gives the general equivariant, and the result is the
following theorem.

\begin{theorem}[Completeness]\label{thm:complete}
Let a two-port in single-tone periodic steady state be time invariant, and let the
scattered-wave phasors depend smoothly on the fundamental incident phasors $\Aone,\Atwo$
away from $\Aone=0$ and $\Atwo=0$. Then
\begin{equation}
  B_{p,h}=P^{h}\sum_{n\in\mathbb Z} f_{p,h,n}\bigl(|\Aone|,|\Atwo|\bigr)\,Q^{n},
  \label{eq:complete}
\end{equation}
with the $f_{p,h,n}$ smooth functions of two real variables, and every map of this form
is time invariant. If in addition the response is smooth at $\Atwo=0$, the coefficient
of $Q^{n}$ is $|\Atwo|^{|n|}$ times a smooth function of $|\Aone|$ and $|\Atwo|^{2}$;
expansion in powers of $|\Atwo|$ gives the Cardiff form
\begin{equation}
  B_{p,h}=(\angle\Aone)^{h}\sum_{r\ge0}\sum_{n}K_{p,h,m,n}|\Atwo|^{m}
  \Bigl(\tfrac{\angle\Atwo}{\angle\Aone}\Bigr)^{n},
  \label{eq:cardiff}
\end{equation}
with $m=|n|+2r$.
\end{theorem}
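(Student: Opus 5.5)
The plan has three steps: reduce the equivariant map to an invariant one by dividing out the carrier, write that invariant as a Fourier series in the relative phase, and then impose smoothness at $\Atwo=0$.

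\textbf{Step 1: reduction to an invariant.} Work on the open set $\Aone\neq0$, $\Atwo\neq0$. There I use polar coordinates $\Aone=\rho_1 e^{j\phi_1}$ and $\Atwo=\rho_2 e^{j\phi_2}$. The circle action \eqref{eq:action} with weight one on both fundamentals sends $(\phi_1,\phi_2)\mapsto(\phi_1+\theta,\phi_2+\theta)$ and leaves $\rho_1,\rho_2$ fixed. Set
\[
G_{p,h}=P^{-h}B_{p,h}.
\]
By \eqref{eq:equiv}, $G_{p,h}$ has weight zero, so it is invariant. Choosing $\theta=-\phi_1$ shows that $G_{p,h}$ depends only on $(\rho_1,\rho_2,\psi)$, where $\psi=\phi_2-\phi_1$. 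It is smooth in these variables, because polar coordinates are a diffeomorphism away from the origin. Since $\psi$ lives on the circle, $G_{p,h}$ is a smooth $2\pi$-periodic function of $\psi$. I expand it in a Fourier series $\sum_n f_{p,h,n}(\rho_1,\rho_2)e^{jn\psi}$, noting that $e^{jn\psi}=Q^n$.

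Each coefficient $f_{p,h,n}$ is obtained by integrating against $e^{-jn\psi}$, which is the Reynolds operator viewed as phase averaging. Differentiation under the integral then shows the coefficients are smooth in $(\rho_1,\rho_2)$. Smoothness in $\psi$ gives rapid decay of the coefficients together with their derivatives, so the series converges in $C^\infty$ on compacta, and \eqref{eq:complete} follows.

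The converse is a one-line check: $|\Aone|$, $|\Atwo|$ and $Q$ are invariant, and $P\mapsto e^{j\theta}P$, so $P^h\mapsto e^{jh\theta}P^h$ and any map of the form \eqref{eq:complete} is equivariant.

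\textbf{Step 2: smoothness at $\Atwo=0$.} Fix $\Aone\ne0$ and use the gauge $\phi_1=0$, which is legitimate because $P$ is smooth there. In this gauge $G_{p,h}$ becomes a smooth function $g(\rho_1,z)$ of the complex variable $z=\Atwo=\rho_2 e^{j\psi}$ near $z=0$, and $f_{p,h,n}$ is its $n$-th angular Fourier mode in $z$. This is the main obstacle: I must show that the $n$-th angular mode of a smooth function on $\mathbb R^2$ has the form $z^n\,c(\rho_1,|z|^2)$ for $n\ge0$, and $\bar z^{|n|}c(\rho_1,|z|^2)$ for $n<0$, with $c$ smooth. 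That is, $f_{p,h,n}=\rho_2^{|n|}c_n(\rho_1,\rho_2^2)$.

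I would do this in two parts.
\begin{itemize}
\item \emph{Weighted invariance.} The function $z\mapsto\frac1{2\pi}\int g(e^{j\alpha}z)e^{-jn\alpha}\,d\alpha$ is smooth, since it is an average of smooth functions. It transforms with weight $n$ under the $U(1)$ action on $z$.
\item \emph{Structure of smooth weight-$n$ functions.} For the weight-zero case, Schwarz's theorem~\cite{schwarz1975} for the $U(1)$ action on $\mathbb C$ says a smooth invariant is a smooth function of the Hilbert-basis generator $|z|^2$. This can also be obtained directly from Whitney's even-function theorem applied to $t\mapsto g_0(t)$ on the real axis. For $n\ne0$, I use Taylor expansion to order $|n|$ with remainder. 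The Taylor coefficients of weight-$n$ monomials $z^a\bar z^b$ force $a-b=n$, so every term is divisible by $z^{n}$ (or by $\bar z^{|n|}$ when $n<0$). Dividing gives a weight-zero function, to which the $n=0$ case applies.
\end{itemize}
The delicate point is that the quotient by $z^n$ must be shown to be smooth and not merely continuous. I would try Hadamard's lemma in $z,\bar z$ first: it writes a function vanishing to order $|n|$ along the weight structure as $z^n$ times a smooth function. Dependence on the parameter $\rho_1$ is carried along throughout.

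\textbf{Step 3: the Cardiff form.} Expand $c_n(\rho_1,\rho_2^2)$ formally in powers $\rho_2^{2r}$. This produces the terms $K_{p,h,m,n}(\rho_1)\,|\Atwo|^{|n|+2r}Q^n$, which is \eqref{eq:cardiff} with $m=|n|+2r$. Since $P=\angle\Aone$ and $Q=\angle\Atwo/\angle\Aone$ in the paper's notation, this is exactly the stated expression. The expansion is understood as Taylor, or asymptotic, in $|\Atwo|$; it becomes a convergent series when $g$ is real-analytic.
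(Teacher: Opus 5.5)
\textbf{Gap in Step~2.} Your Step~1 is correct. It is a legitimate, more elementary substitute for the paper's appeal to Schwarz's theorem on $\Aone\ne0$, $\Atwo\ne0$: the action is free there, the gauge $\phi_1=0$ is a global slice, and smoothness in $(\rho_1,\rho_2,\psi)$ is immediate. The Fourier expansion, the converse and Step~3 then coincide with the paper.

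The gap is in Step~2, at exactly the point that carries the content of the second claim: the factor $|\Atwo|^{|n|}$ and the dependence on $|\Atwo|^{2}$ alone, i.e.\ $m=|n|+2r$.
\begin{itemize}
\item Dividing the weight-$n$ mode $g_n$ by $z^{n}$ and then invoking the $n=0$ case is circular. Schwarz's or Whitney's theorem needs the quotient to be smooth at $z=0$, which is what is to be proved.
\item Taylor expansion to order $|n|$ does not supply this. The polynomial part is $c\,z^{n}$, but the remainder is only $O(|z|^{|n|+1})$, and its quotient by $z^{n}$ is merely continuous.
\item Hadamard's lemma does not do what you attribute to it. It writes a function vanishing to order $|n|$ as $\sum_{a+b=|n|}z^{a}\bar z^{b}h_{ab}$ with smooth $h_{ab}$, and the terms with $b\ge1$ are not multiples of $z^{n}$. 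At first order it gives $g_n=zF_{n-1}+\bar z F_{n+1}$ with $F_k$ smooth of weight $k$. The second term raises the weight, so iterating never isolates a factor $z^{n}$.
\end{itemize}

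\textbf{The fix: multiply rather than divide.} Take $n>0$.
\begin{enumerate}
\item The product $\bar z^{\,n}g_n(\rho_1,z)$ is smooth and of weight zero. By Schwarz, or by Whitney's even-function theorem on the real axis, it equals $C(\rho_1,|z|^{2})$ with $C$ smooth.
\item Your observation that the Taylor series of $g_n$ at $z=0$ contains only monomials $z^{n+k}\bar z^{k}$ is correct. Hence the Taylor series of $\bar z^{\,n}g_n$ contains only $|z|^{2(n+k)}$, $k\ge0$, which forces $\partial_s^{k}C(\rho_1,0)=0$ for $k<n$.
\item The division now happens in the single real variable $s=|z|^{2}$, where Hadamard's lemma does apply, with $\rho_1$ as a parameter. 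It gives $C=s^{n}c_n(\rho_1,s)$ with $c_n$ smooth.
\item Therefore $g_n=C/\bar z^{\,n}=z^{n}c_n(\rho_1,|z|^{2})$ for $z\ne0$, and at $z=0$ by continuity.
\end{enumerate}
The case $n<0$ is the complex conjugate.

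An equivalent route stays within Whitney's theorem. Restrict $g_n$ to the real axis, where it is a smooth function of $t$ of parity $(-1)^{n}$ vanishing to order $n$ at $t=0$. Divide by $t^{n}$ in one variable and apply Whitney to the even quotient. Then use $g_n(\rho_1,z)=e^{jn\arg z}g_n(\rho_1,|z|)$.

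With this step supplied, your proof is complete and is in substance the paper's argument written out in detail.
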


\begin{proof}
$B_{p,h}P^{-h}$ has weight zero, so it is an invariant smooth function of
$(\Aone,\Atwo)$ on the invariant open set $\Aone\ne0$, $\Atwo\ne0$, and by Schwarz's
theorem (which localizes to invariant open sets) a smooth function of
the generators $|\Aone|$, $|\Atwo|$, $\Rel(\Atwo\Ab_{1,1})$, $\Iml(\Atwo\Ab_{1,1})$.
On that set the last two are $|\Aone||\Atwo|\cos\Delta\phi$ and
$|\Aone||\Atwo|\sin\Delta\phi$, so the function is a smooth function of the magnitudes
and of the angle $\Delta\phi$, $2\pi$-periodic in the latter; its Fourier series in
$\Delta\phi$ is \eqref{eq:complete}. Conversely each term of \eqref{eq:complete} has
weight $h$. For the second statement, $|\Atwo|^{m}Q^{n}$ is a polynomial in
$\Atwo,\Ab_{2,1}$ --- hence smooth at $\Atwo=0$ --- exactly when $m\ge|n|$ and $m-|n|$
is even, and a smooth function of $\Atwo$ with weight $n$ in $\Atwo$ alone has this
form by the same invariant-theory argument applied to the single weight-one variable
$\Atwo$.
\end{proof}

We want to emphasize three points about Theorem~\ref{thm:complete}. First, the representation\footnote{Here ``representation'' is used in the sense of a
representation formula, a way of writing every element of a function space, not in the
group-theoretic sense of Section~\ref{sec:symmetry}; \eqref{eq:complete} describes the
equivariant maps between two representations of $U(1)$.} \eqref{eq:complete} is not an
approximation, a truncation or a
modeling assumption; it is the general solution, and the index $n$ is canonical rather
than chosen: the sum over $n$ is the Fourier series in the relative phase $\Delta\phi$, so
$n$ is a weight, a property of the group, not of the modeler. It is the form
written in~\cite{woodington2010}, where the coefficients are described as functions of
$|a_{1,1}|$ and $|a_{2,1}|$; what is added is that it is exhaustive. Second, memorylessness
was not assumed. A device with thermal or trapping memory held in CW steady state obeys
\eqref{eq:complete}, because the only physical hypothesis was that the response is
unchanged when the clock is re-zeroed; memory alters the response under modulation, which no CW measurement detects.
This is the conclusion the X-parameter literature also reaches~\cite{root2013}. Third,
the choice of carrier is a choice of gauge, in the physicist's sense: carrying the weight
on $\Atwo$ or on an external reference changes the coefficient functions and not the
object, exactly as moving a reference plane changes $S$-parameters. The factor $P^{h}$
is the gauge chosen in all three literatures: it is the phase normalization $P^{k}$ of
X-parameters~\cite{root2005phd,root2013}, the virtual harmonics generator of
HOSIDF~\cite{nuij2006}, and the phase vector of the Cardiff model.

\subsection{Counting the coefficients}

How many independent invariants and equivariants exist at each degree is answered by
the Molien series of the action (Appendix~\ref{app:molien}), and for two weight-one
variables the invariant ring has a simple structure: it is a free module of rank two over
the polynomial ring in three algebraically independent quadratic invariants, generated by
the constant $1$ and $\Iml(\Atwo\Ab_{1,1})$. Appendix~\ref{app:molien} records the series and the
primary/secondary (Hironaka) decomposition behind it; no result in the main text depends
on the appendix.

\section{The Exponents as Counters}
\label{sec:exponents}

\subsection{Cardiff terms as monomials}

Write $A\equiv\Atwo$ and express a single term of \eqref{eq:cardiff} as a monomial in
$A$ and its conjugate:
\begin{equation}
  |A|^{m}(\angle A)^{n}=A^{a}\Ab^{b},\qquad a=\tfrac{m+n}{2},\quad b=\tfrac{m-n}{2}.
  \label{eq:ab}
\end{equation}
Imposing the Cardiff relation $m=|n|+2r$ gives the identity
\begin{equation}
  r=\min(a,b).
  \label{eq:rmin}
\end{equation}
The index $r$ counts the conjugate pairs $A\Ab=|A|^{2}$ in the term: an $r=0$ term is a
pure power of $A$ or of $\Ab$, an $r=1$ term carries one factor $|A|^{2}$, and so on.
Table~\ref{tab:dictionary} lists the low-order terms in both notations, and
Fig.~\ref{fig:lattice} places the same terms on the $(a,b)$ lattice, where lines of constant $n$
run diagonally, lines of constant $m$ anti-diagonally, and $r$ is the distance from the
nearer axis. The restriction $r\le1$ retains only the first two rows and the first two
columns of the lattice, an L-shaped region along the axes, and excludes the whole
interior $a\ge2$, $b\ge2$.

\begin{table}[!t]
\centering
\caption{The same terms in both notations ($b_{2,1}$-type responses).}
\label{tab:dictionary}
\begin{tabular}{llrrrl}
\toprule
monomial & Cardiff term & $m$ & $n$ & $r$ & interpretation \\
\midrule
$1$              & constant             & 0 & 0    & 0 & no load dependence \\
$A$              & $|A|(\angle A)$      & 1 & 1    & 0 & linear in load wave \\
$\Ab$            & $|A|(\angle A)^{-1}$ & 1 & $-1$ & 0 & linear, conjugate \\
$A^2$            & $|A|^2(\angle A)^2$  & 2 & 2    & 0 & second order \\
$A\Ab$           & $|A|^2$              & 2 & 0    & 1 & second order, one pair \\
$A^3$            & $|A|^3(\angle A)^3$  & 3 & 3    & 0 & third order \\
$A^2\Ab$         & $|A|^3(\angle A)$    & 3 & 1    & 1 & third order, one pair \\
$A^2\Ab^{2}$     & $|A|^4$              & 4 & 0    & 2 & excluded by $r\le1$ \\
\bottomrule
\end{tabular}
\end{table}

\begin{figure}[!t]
\centering
\begin{tikzpicture}[scale=0.56]
  \fill[black!9] (-0.45,-0.45) rectangle (5.45,1.45);
  \fill[black!9] (-0.45,-0.45) rectangle (1.45,5.45);
  \draw[black!18,step=1] (-0.45,-0.45) grid (5.45,5.45);
  \draw[-{Stealth[length=2.2mm]}] (-0.45,0)--(5.95,0) node[right]{$a$};
  \draw[-{Stealth[length=2.2mm]}] (0,-0.45)--(0,5.95) node[above]{$b$};
  \foreach \i in {0,...,5}{
    \node[below,font=\scriptsize] at (\i,-0.5) {\i};
    \node[left,font=\scriptsize] at (-0.5,\i) {\i};}
  \draw[black!65,dashed,thick] (4.25,-0.25)--(-0.25,4.25);
  \draw[black!65,dotted,thick] (1.85,-0.15)--(5.35,3.35);
  \foreach \a in {0,...,5}{\foreach \b in {0,...,5}{
     \pgfmathtruncatemacro{\rr}{min(\a,\b)}
     \ifnum\rr<2
        \fill[black] (\a,\b) circle (2.3pt);
     \else
        \fill[white] (\a,\b) circle (2.6pt);
        \draw[black!60] (\a,\b) circle (2.6pt);
     \fi}}
  \draw[very thick] (2,2) circle (6.5pt);
  \begin{scope}[font=\scriptsize]
    \fill[black] (6.6,5.15) circle (2.3pt);
    \node[anchor=west] at (6.85,5.15) {kept by $r\le1$};
    \fill[white] (6.6,4.5) circle (2.6pt);
    \draw[black!60] (6.6,4.5) circle (2.6pt);
    \node[anchor=west] at (6.85,4.5) {discarded};
    \draw[black!65,dashed,thick] (6.35,3.5)--(6.95,3.5);
    \node[anchor=west] at (7.1,3.5) {$m=a+b=4$};
    \draw[black!65,dotted,thick] (6.35,2.85)--(6.95,2.85);
    \node[anchor=west] at (7.1,2.85) {$n=a-b=2$};
    \node[anchor=west,align=left] at (6.5,1.6)
       {$A^{2}\bar{A}^{2}=|A|^{4}$\\ first term discarded};
    \draw[black!70] (6.45,1.6)--(2.35,2.0);
  \end{scope}
\end{tikzpicture}
\caption{Monomials $A^{a}\Ab^{b}$ on the exponent lattice. The shaded L-shaped region is
$\min(a,b)\le1$, the Cardiff restriction $r\le1$: it contains every monomial through
total degree 3, and $A^{2}\Ab^{2}=|A|^{4}$ is the first monomial it excludes.}
\label{fig:lattice}
\end{figure}

Through total degree $3$ the L-shaped region contains every monomial. Beyond that the
number of monomials retained by $r\le1$ grows linearly, $4D-2$, while the number of all
monomials of degree at most $D$ grows quadratically, $(D+1)(D+2)/2$: the restriction keeps $14$ of $15$ terms at $D=4$, $18$ of $21$ at
$D=5$, $34$ of $55$ at $D=9$ and $46$ of $91$ at $D=12$. The restriction therefore
removes many coefficients at high order and almost none at low order, which raises the
question of when the discarded terms are identically zero.

\subsection{The degree theorem}

Let the device be memoryless, with its dependence on the load-side voltage a polynomial
of degree $K$, and let the drive-side nonlinearity contribute harmonics of index $\ell$.
The memoryless hypothesis is used only here, because a polynomial degree in $v_2(t)$ is
a property of a static map. With
$v_2(t)=\Rel(Ae^{j\theta})$, $\theta=\omega_0 t$, the binomial theorem gives
\begin{equation}
  v_2^{\,p}=2^{-p}\sum_{q=0}^{p}\binom{p}{q}A^{q}\Ab^{\,p-q}e^{j(2q-p)\theta}:
  \label{eq:binom}
\end{equation}
a term of degree $p$ in the load-side voltage carries exactly $p$ factors drawn from
$\{A,\Ab\}$ and sits at harmonic offset $2q-p$. Extracting the $h$-th harmonic requires
that $\ell+(2q-p)=h$, so with $a=q$, $b=p-q$,
\begin{equation}
  a+b=p,\qquad a-b=h-\ell .
  \label{eq:key}
\end{equation}

\begin{theorem}[$r_{\max}=\lfloor K/2\rfloor$]\label{thm:rmax}
If a memoryless device's dependence on the load-side voltage is a polynomial of degree $K$, and
the load-side voltage is set by the incident wave, $v_2=\Rel(Ae^{j\theta})$, every term
in the response satisfies $\min(a,b)\le\lfloor K/2\rfloor$; in Cardiff
indices, $r_{\max}=\lfloor K/2\rfloor$.
\end{theorem}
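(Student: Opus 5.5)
The plan is to make the binomial bookkeeping of \eqref{eq:binom}--\eqref{eq:key} rigorous, and then to note that $\min(a,b)$ is bounded by half the total degree $a+b$.

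\emph{Setting up the response.} Write the memoryless device as a static map of the form
\[
  b(t)=\sum_{p=0}^{K} g_p\bigl(v_1(t)\bigr)\,v_2(t)^{p}.
\]
Here $g_p$ collects the drive-side dependence. It is periodic at $\omega_0$ through $v_1$, and its harmonics carry phasor coefficients $G_{p,\ell}$, which depend only on the drive-side wave, at offsets $e^{j\ell\theta}$ for $\ell\in\mathbb Z$. The $h$-th harmonic phasor $B_h$ is the $e^{jh\theta}$ Fourier coefficient of $b$. Expanding each $v_2^{p}$ by \eqref{eq:binom} and collecting terms writes $B_h$ as a finite sum of $G_{p,\ell}\binom{p}{q}2^{-p}A^{q}\Ab^{p-q}$ over those $(p,q,\ell)$ with $\ell+2q-p=h$. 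Any real-part bookkeeping only mixes in conjugate drive-side coefficients and never adds factors of $A$ or $\Ab$, so it does not affect the argument. This step is the one place the memoryless hypothesis is used: the load-side voltage enters only through the pointwise powers $v_2^{p}$ with $p\le K$.

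\emph{Bounding $r$.} Each surviving monomial $A^{a}\Ab^{b}$ has $a=q$ and $b=p-q$, so by \eqref{eq:key} it satisfies $a+b=p\le K$. Then
\[
  \min(a,b)\le\frac{a+b}{2}\le\frac{K}{2},
\]
and since $\min(a,b)$ is an integer, $\min(a,b)\le\lfloor K/2\rfloor$. Identifying monomials with Cardiff terms via \eqref{eq:ab} and \eqref{eq:rmin} gives $r=\min(a,b)\le\lfloor K/2\rfloor$, with $m=a+b$ and $n=a-b=h-\ell$.

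\emph{Sharpness and the main obstacle.} The statement writes $r_{\max}=\lfloor K/2\rfloor$, so I would also show the bound is attained. Take the top degree $p=K$ and $q=\lfloor K/2\rfloor$, which gives $\min(q,K-q)=\lfloor K/2\rfloor$. This monomial lands at $n=2q-K\in\{0,-1\}$, so it needs a drive-side harmonic $\ell=h-n$ with $G_{K,\ell}\ne0$. The binomial coefficient is nonzero, so the term survives generically. The one real subtlety is cancellation: could different $(p,q,\ell)$ combine into the same Cardiff coefficient and cancel? They cannot, because distinct $(a,b)$ give distinct monomials and the monomials $A^{a}\Ab^{b}$ are linearly independent functions of $A$. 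Hence "every term satisfies" is a statement about the support of a genuine polynomial in $A,\Ab$, and attainment only requires the single coefficient $G_{K,h-n}$ to be nonzero. I expect this to be the main point to state carefully, and it holds for a generic degree-$K$ device.
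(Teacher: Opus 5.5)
Your proof is correct and is essentially the paper's: both expand $v_2^{p}$ binomially to get $a+b=p\le K$ for every monomial, then bound $\min(a,b)$ by half of $a+b$ using integrality (the paper phrases it as $\min(a,b)\ge r\Rightarrow p\ge 2r$), and your non-separable setup $\sum_p g_p(v_1)v_2^{p}$ matches the paper's remark that separability is not needed. For attainment, the paper (remark (ii) after the theorem) uses a $p=2r$, $n=0$ term fed by the drive harmonic $\ell=h$, whereas you use the top-degree $p=K$ term, which has the mild advantage of resting directly on the nonvanishing degree-$K$ coefficient; both are generic conditions on the drive-side harmonics.
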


\begin{proof}
$\min(a,b)\ge r$ requires $a\ge r$ and $b\ge r$; adding and using \eqref{eq:key},
$p\ge2r$. Since $p\le K$ and $r$ is an integer, $r\le\lfloor K/2\rfloor$.
\end{proof}
\noindent Appendix~\ref{app:noalgebra} restates this argument in a self-contained form that
makes no reference to the circle action.

The substance of the theorem is not the two-line proof but the identification
\eqref{eq:rmin}, which turns an index in a fitting formula into a count of conjugate
pairs; once that identification is made, the bound follows. We want to emphasize four points about Theorem~\ref{thm:rmax}.
(i) The restriction $r\le1$ holds for every term if and only if $K\le3$: a cubic
load-side nonlinearity cannot supply the four factors that two conjugate pairs need, so
the restriction is exact for a cubic device and generically fails at fourth order, where
the first term to appear is $A^{2}\Ab^{2}=|A|^{4}$.
(ii) The bound is attained: a term with $\min(a,b)=r$ arises from $p=2r$ whenever the
drive side supplies $\ell=h$, which any nonlinearity with a linear term does at the
fundamental.
(iii) Separability is not assumed; the argument uses only the degree in $v_2$, so a
non-separable $F(v_1,v_2)$ with cross terms obeys the same bound. The theorem was checked
by brute-force expansion for $K=2,\dots,7$, separable and not.
(iv) Where the expansion is centered does not matter. A-pull data is naturally expanded
about the center $c$ of a circle rather than about $A=0$; substituting $A\mapsto A+c$
into $A^{\alpha}\Ab^{\beta}$ produces only terms with $a\le\alpha$, $b\le\beta$, so
$\min(a,b)$ cannot increase. One feature does change, and is needed in Section~\ref{sec:apull}: about $A=0$ a
source term of load-side order $p$ produces monomials with $a+b=p$ exactly, about
$A=c$ monomials with $a+b\le p$, so in a centered expansion the degree of a term is a
lower bound on the order that produced it, while the highest degree present is still
exactly $K$.

\subsection{Physical meaning of the three indices}

The identities \eqref{eq:key} give each of the three Cardiff indices a physical
meaning, stated as three corollaries.

\begin{corollary}\label{cor:m}
$m=|n|+2r=a+b=p$: the magnitude exponent is the order of the load-side nonlinearity that
produced the term.
\end{corollary}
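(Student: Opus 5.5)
The plan is to chain together three identities that are already in place. The first is \eqref{eq:ab}, which converts a Cardiff term to the monomial form. The second is \eqref{eq:rmin}, which identifies $r$. The third is \eqref{eq:key}, which comes from the binomial expansion \eqref{eq:binom} of a memoryless polynomial device. We work in the setting of Theorem~\ref{thm:rmax}: a memoryless device, load-side voltage $v_2=\Rel(Ae^{j\theta})$, and expansion about $A=0$.

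\textbf{Step 1: $m=a+b$.} Start from a single Cardiff term $|A|^{m}(\angle A)^{n}$ with $m=|n|+2r$. By \eqref{eq:ab} it equals $A^{a}\Ab^{b}$ with $a=(m+n)/2$ and $b=(m-n)/2$. The relation $m=|n|+2r$ makes $m-|n|$ even and nonnegative, so $a,b$ are nonnegative integers, and adding the two gives $a+b=m$. By \eqref{eq:rmin}, $\min(a,b)=r$ and $|a-b|=|n|$. This yields the identity $m=|n|+2r=a+b$ purely algebraically: whichever of $a,b$ is smaller equals $r$, and the other equals $r+|n|$.

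\textbf{Step 2: $a+b=p$.} Trace where the monomial comes from. By \eqref{eq:binom}, each term of degree $p$ in $v_2$ contributes only monomials $A^{q}\Ab^{p-q}$. So a monomial $A^{a}\Ab^{b}$ in the $h$-th harmonic response arises only from source terms with $a+b=p$, which is the first relation in \eqref{eq:key}. Combined with Step 1, this gives $m=p$.

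\textbf{Step 3: non-separable devices and uniqueness of $p$.} For a non-separable $F(v_1,v_2)$, I would expand it as a sum of terms $g(v_1)\,v_2^{p}$. The drive-side factor depends only on $\Aone$ and supplies the harmonic offset $\ell$, but it contributes no powers of $A$ or $\Ab$. Hence the total $A$-degree is still exactly $p$. Since $p$ is determined by $(a,b)$, each Cardiff coefficient $K_{p,h,m,n}$ collects contributions only from the order-$p=m$ part of the load-side nonlinearity. This justifies reading the statement as ``the'' order that produced the term.

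\textbf{Main obstacle.} The delicate point is the meaning of ``the order that produced the term,'' because it depends on where the expansion is centered. About $A=0$, $a+b=p$ holds exactly. In a centered expansion (remark (iv) after Theorem~\ref{thm:rmax}) it degrades to $a+b\le p$. I would therefore state the corollary for the expansion about $A=0$ and note the inequality version for the centered case. A smaller issue is ensuring that distinct $p$ cannot cancel to produce a spurious monomial. This is ruled out because monomials $A^{a}\Ab^{b}$ with different $a+b$ are linearly independent functions, so each coefficient is attributed to exactly one $p$.
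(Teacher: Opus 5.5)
Your proof is correct and follows essentially the paper's own argument: the corollary comes from combining the monomial dictionary \eqref{eq:ab}, which gives $m=a+b$ with $r=\min(a,b)$ by \eqref{eq:rmin}, with the first relation $a+b=p$ of \eqref{eq:key} from the binomial expansion \eqref{eq:binom}. Your extra remarks (the non-separable $F(v_1,v_2)$, the centered expansion giving only $a+b\le p$, and the linear independence of the monomials) correspond to points (iii) and (iv) after Theorem~\ref{thm:rmax}, and they make explicit what the paper leaves implicit; note only that the $p$ in $K_{p,h,m,n}$ is the paper's port index, not the degree.
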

\begin{corollary}\label{cor:n}
$n=a-b=h-\ell$: the phase exponent is set by the harmonic the drive-side nonlinearity
supplied.
\end{corollary}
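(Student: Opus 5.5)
The statement to prove is Corollary~\ref{cor:n}: $n=a-b=h-\ell$. The plan is to read it off the two identities already in hand, the monomial dictionary \eqref{eq:ab} and the harmonic-balance relation \eqref{eq:key}. First, \eqref{eq:ab} defines $a=(m+n)/2$ and $b=(m-n)/2$. Subtracting gives $n=a-b$ identically, for any Cardiff term written as $A^{a}\Ab^{b}$. Second, \eqref{eq:key} applies to a single source term. Take a term of degree $p$ in $v_2$ from \eqref{eq:binom}, multiplied by a drive-side factor at harmonic $\ell$. Its $q$-th binomial piece contributes $A^{q}\Ab^{p-q}$ at harmonic $\ell+2q-p$. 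Selecting harmonic $h$ forces $2q-p=h-\ell$, that is, $a-b=h-\ell$ with $a=q$ and $b=p-q$. Chaining the two equalities gives $n=h-\ell$.

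Next I would match the carrier. In \eqref{eq:cardiff} the response is written as $(\angle\Aone)^{h}$ times $(\angle\Atwo/\angle\Aone)^{n}$. The drive-side factor at harmonic $\ell$ carries $\Aone$-phase weight $\ell$, since it is a function of $\Aone$ of weight $\ell$ under the circle action. So a term $(\text{drive factor})\cdot A^{a}\Ab^{b}$ has total $\Aone$-phase $\ell$ and $\Atwo$-phase $a-b$. Rewriting it as $P^{h}Q^{n}$ requires $n=a-b$ and $h=\ell+n$, which is consistent with the relation just derived. This shows the $n$ defined by the monomial dictionary is the same index as in the Cardiff form.

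The main obstacle is a term that collects contributions from several drive-side harmonics $\ell$. A single Cardiff coefficient $K_{p,h,m,n}$ could then mix sources with different $(p,\ell)$. I would resolve this by noting that $n$ fixes $\ell=h-n$ uniquely for each contributing monomial. Every source contributing to a given $(h,n)$ therefore shares the same $\ell$, and the interpretation is unambiguous. If the drive-side nonlinearity produces $\ell$ of either sign through conjugate factors, the same relation holds with signed $\ell$. Under the centered expansion of remark (iv), recentering changes $a$ and $b$ only through lower-order terms. I would restrict the corollary to the expansion about $A=0$, where \eqref{eq:key} holds exactly.
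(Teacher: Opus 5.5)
Your proposal is correct and follows the paper's route: the corollary is read off by combining the dictionary \eqref{eq:ab}, which gives $n=a-b$, with the harmonic-matching identity \eqref{eq:key}, which gives $a-b=h-\ell$. Your carrier-matching check and your restriction to the expansion about $A=0$ are sound additions, but the argument does not need them (the paper fixes the drive phase, so $\angle\Aone=1$ there).
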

\begin{corollary}\label{cor:mix}
$m+|n-h|=p+|\ell|$ is the lowest total nonlinearity order able to produce the term,
load-side order plus drive-side harmonic. This is the ``mixing order'' tabulated
in~\cite{alrawachy2025}, and the identity holds on every row of Table~2
of~\cite{alrawachy2025}.
\end{corollary}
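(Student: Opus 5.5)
The plan is to prove Corollary~\ref{cor:mix} directly from the identities \eqref{eq:key}, which already carry the physics. I will take Corollaries~\ref{cor:m} and~\ref{cor:n} as given, so $m=p$ and $n=h-\ell$. Substituting the second gives $n-h=-\ell$, hence $|n-h|=|\ell|$. Adding $m=p$ then yields the identity $m+|n-h|=p+|\ell|$. The same substitution fixes the harmonic bookkeeping in the response: the drive-side nonlinearity places the term at harmonic $\ell$, the load-side power $v_2^{p}$ shifts it by $2q-p=a-b$, and the total must be $h$.

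The content lies in the word ``lowest''. I will argue that producing a drive-side harmonic of index $\ell$ requires a nonlinearity of order at least $|\ell|$ in $v_1$. This follows from applying \eqref{eq:binom} to $v_1=\Rel(\Aone e^{j\theta})$: a power $v_1^{s}$ contains only offsets $2q'-s$, whose absolute value is at most $s$. The offset $\ell$ is attained with $s=|\ell|$ by taking $q'=s$ or $q'=0$. A term carrying $a$ factors of $A$ and $b$ of $\Ab$ needs load-side degree exactly $p=a+b$ about $A=0$, by Corollary~\ref{cor:m}. The least total order of a mixed monomial $v_1^{s}v_2^{p}$ producing the term is therefore $p+|\ell|$. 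Negative $\ell$ corresponds to conjugate drive factors $\Ab_{1,1}$, absorbed into the carrier $P^{h}$ as in Theorem~\ref{thm:complete}.

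The main obstacle is justifying minimality when lower-order paths might combine. The worry is that a product of lower-order pieces could yield the same monomial in $(\Aone,\Atwo)$ with less total degree. I will handle this by working with total degree in the pair $(\Aone,\Ab_{1,1})$ together with $(A,\Ab)$. Degree is additive under products, and the offsets satisfy the triangle inequality, so no combination beats $p+|\ell|$.

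The final claim, agreement with Table~2 of~\cite{alrawachy2025}, is a finite check. For each tabulated pair $(m,n)$ with output harmonic $h$, I will compute $m+|n-h|$ and compare it with the listed mixing order row by row.
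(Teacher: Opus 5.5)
Your proposal is correct and follows the paper's route. The identity comes from substituting Corollaries~\ref{cor:m} and~\ref{cor:n}, both read off \eqref{eq:key}. The agreement with Table~2 of~\cite{alrawachy2025} is the row-by-row evaluation of $m+|n-h|$ with $h=1$, which the paper records as the last column of Table~\ref{tab:alrawachy}. Your explicit minimality argument (drive-side order at least $|\ell|$ from the binomial offsets, plus additivity of degree in $\Aone,\Ab_{1,1},A,\Ab$) simply spells out what the paper leaves implicit in the phrase ``load-side order plus drive-side harmonic.''
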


None of the indices is merely a label in a fitting formula. The practical difference is that ``we truncate
at $r\le1$ because it fits well'' becomes ``we truncate at $r\le1$ because the load-side
nonlinearity is cubic,'' a statement a measurement can check.

\section{Loaded Devices: Decay in $r$ and a Stopping Rule}
\label{sec:loaded}

Theorem~\ref{thm:rmax} concerns the \emph{intrinsic} nonlinearity. In a measurement the
load-side voltage is not independent: it satisfies $v_{ds}=a_2+b_2$, and $b_2$ on the
right-hand side is the device's own output, so the response is a fixed point rather
than a polynomial, and
a fixed point of even a quadratic map is a power series with every order present. A
loaded device has no finite $K$ and, strictly, no finite $r_{\max}$.

What survives is a decay of the coefficients in $r$. Consider a device whose intrinsic
dependence on the load-side voltage is quadratic, so that Theorem~\ref{thm:rmax} gives $r_{\max}=1$:
\begin{multline}
  i_d(v_1,v_2)=0.8v_1+0.2v_1^{2}-0.1v_1^{3}\\
  +0.08v_1v_2+0.04v_1^{2}v_2+0.05v_1v_2^{2}+0.04v_2^{2},
\end{multline}
in $Z_0=1$ units with the drive $v_1=\cos\theta$, $\theta=\omega_0t$, fixed. Inject only the fundamental at
port~2, $a_2(t)=\Rel(Ae^{j\theta})$, close the loop pointwise in time,
$b_2=a_2-i_d(v_1,a_2+b_2)$, sample $B_{2,1}$ over the disc $|A|\le\rho$, fit the
monomials $A^{a}\Ab^{b}$ in units of $\rho$ (so each coefficient is the size of that
term at the edge of the measured region) and record the largest coefficient in each
$r$-class relative to the constant term. Table~\ref{tab:decay} and Fig.~\ref{fig:decay}
give the result.

\begin{table}[!t]
\centering
\caption{Largest coefficient in each $r$-class, relative to the constant term, for the
quadratic device open- and closed-loop.}
\label{tab:decay}
\setlength{\tabcolsep}{3pt}
\begin{tabular}{lcccccc}
\toprule
$r$ & 0 & 1 & 2 & 3 & 4 & 5 \\
\midrule
open, $\rho=0.5$ & 1 & $8.6\mathrm{e}{-3}$ & $10^{-14}$ & $10^{-14}$ & $10^{-14}$ & $10^{-14}$ \\
closed, $\rho=0.5$ & 1 & $3.8\mathrm{e}{-2}$ & $1.1\mathrm{e}{-3}$ & $5.9\mathrm{e}{-5}$ & $4.2\mathrm{e}{-6}$ & $5.0\mathrm{e}{-7}$ \\
closed, $\rho=0.25$ & 1 & $9.4\mathrm{e}{-3}$ & $6.6\mathrm{e}{-5}$ & $9.1\mathrm{e}{-7}$ & $1.7\mathrm{e}{-8}$ & $3.9\mathrm{e}{-10}$ \\
\bottomrule
\end{tabular}
\end{table}

\begin{figure}[!t]
\centering
\begin{tikzpicture}
\begin{semilogyaxis}[
  width=\columnwidth, height=5.2cm,
  xlabel={conjugate-pair index $r$},
  ylabel={largest coefficient (relative)},
  xmin=-0.3, xmax=4.3, ymin=3e-7, ymax=4,
  xtick={0,1,2,3,4},
  grid=both, grid style={black!12},
  tick label style={font=\scriptsize}, label style={font=\footnotesize},
  legend style={font=\scriptsize, draw=black!30, at={(0.98,0.98)}, anchor=north east}]
\addplot[black, thick, mark=*, mark size=1.8pt]
  coordinates {(0,1) (1,3.8e-2) (2,1.1e-3) (3,5.9e-5) (4,4.2e-6)};
\addlegendentry{closed loop, $\rho=0.5$}
\addplot[black!45, thick, mark=o, mark size=1.8pt]
  coordinates {(0,1) (1,9.4e-3) (2,6.6e-5) (3,9.1e-7)};
\addlegendentry{closed loop, $\rho=0.25$}
\addplot[black!55, dashed, thick, domain=-0.3:4.3, samples=2] {1e-4};
\addlegendentry{accuracy / noise floor}
\end{semilogyaxis}
\end{tikzpicture}
\caption{The stopping rule for a loaded device: plot the largest coefficient at each $r$
on a log axis and truncate where the decay meets the floor. The slope is set by
load-wave amplitude and loop gain and is itself device data.}
\label{fig:decay}
\end{figure}

The open-loop row illustrates Theorem~\ref{thm:rmax} ($K=2$; nothing beyond $r=1$ to machine precision).
The closed-loop rows fall by one to one-and-a-half decades per step, and halving the
load-wave amplitude divides the later step ratios by about four: each extra conjugate
pair costs a factor $|A|^{2}$ times a loop-gain factor, so the per-step decay is roughly
$(\text{loop gain}\times|A|)^{2}$, set by the degree of mismatch and not by
the fit. For a loaded device $r_{\max}$ is therefore set by where the decay crosses the
accuracy floor, and the decay rate is measurable from the fit itself. The rule is: fit,
plot coefficient magnitude against $r$ on a log axis, truncate where it meets the noise.
Theorem~\ref{thm:rmax} adds the intrinsic behavior: it states what the device would
produce without the loop, so a decay that stops at low $r$ is evidence about the device
rather than about the fit.

\section{Tailored A-pull Measures the Decomposition}
\label{sec:apull}

\subsection{Clusters}\label{sec:clusters}

In a conventional load-pull sweep every term of \eqref{eq:cardiff} contributes to every
point, the coefficients are recovered by least squares, and the answer depends on the
basis assumed. The tailored A-pull method~\cite{alrawachy2025,tasker2020} steps the
load-side wave around a spiral about a chosen center $a^{c}_{2,1}$,
\begin{equation}
  a_{2,1}(I)=a^{c}_{2,1}+\tfrac12\Delta a\bigl[1+\cos(2\pi S_{am}I/N)\bigr]
  e^{j2\pi S_{pm}I/N},
  \label{eq:spiral}
\end{equation}
$I=0,\dots,N-1$, and Fourier transforms the response over the sequence index. The
model of~\cite{alrawachy2025}, its Eq.~(7), is written in the deviation
$A=a_{2,1}-a^{c}_{2,1}$, the centered expansion of Section~\ref{sec:exponents}. A term
$A^{a}\Ab^{b}=|A|^{m}e^{jn\phi}$ followed around the spiral contributes $e^{jn\phi}$
with $\phi$ advancing at rate $S_{pm}$ --- one line at $nS_{pm}$ --- times $|A|^{m}$,
a raised cosine at rate $S_{am}$ raised to the $m$-th power, which by the binomial expansion
\eqref{eq:binom} has harmonics up to order $m$ and no further. The result is a
\emph{cluster} of lines at
\begin{equation}
  r_jS_{am}+nS_{pm},\qquad r_j=-m,\dots,m,
  \label{eq:cluster}
\end{equation}
centered at $nS_{pm}$ with half-width $mS_{am}$ (Eq.~(8) of~\cite{alrawachy2025}). Cluster center gives
$n$, cluster width gives $m$: the spectrum is a picture of the exponent decomposition,
measured rather than fitted.

\begin{proposition}[The cluster width is a device parameter]
By Corollary~\ref{cor:m} and point (iv) after Theorem~\ref{thm:rmax}, the half-width of
a cluster is the lowest load-side order able to produce it, and the widest cluster
present has half-width equal to the order $K$ of the load-side nonlinearity.
\end{proposition}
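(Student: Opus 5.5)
The argument has three steps. First, the half-width of a cluster equals the magnitude exponent $m$ of the term that produces it. Second, that degree is converted into a statement about the load-side order $p$. Third, the widest cluster is identified with $K$.

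For the first step I would use the cluster analysis just given. A centered monomial $A^{a}\Ab^{b}=|A|^{m}e^{jn\phi}$ followed around the spiral \eqref{eq:spiral} produces the lines \eqref{eq:cluster}, with half-width $mS_{am}$, where $m=a+b$. The binomial expansion \eqref{eq:binom} of the $m$-th power of the raised cosine has a nonzero coefficient on the top harmonic $\pm m$. The extreme lines are therefore actually present, and the half-width is exactly $m$ rather than an upper bound. I would assume generic $S_{am},S_{pm}$ so that distinct clusters do not overlap, and I would state this as an explicit hypothesis.

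For the second step I would invoke Corollary~\ref{cor:m} together with point (iv) after Theorem~\ref{thm:rmax}. Corollary~\ref{cor:m} says that, about $A=0$, a source term of load-side order $p$ contributes only monomials with $a+b=p$. Under the shift $A\mapsto A+c$, each monomial $(A+c)^{\alpha}(\Ab+\bar c)^{\beta}$ with $\alpha+\beta=p$ expands into monomials with $a\le\alpha$ and $b\le\beta$, so $a+b\le p$. Conversely, a centered monomial of degree $m$ can be produced by a source term of order $p=m$, namely the $c$-free part of that same term. Hence $m$ is a lower bound on the order of any source term contributing to the cluster, and this lower bound is attained. That is the meaning of ``the lowest load-side order able to produce it.''

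For the third step, every term has $m\le p\le K$, so no cluster is wider than $K$. For attainment, I would take the top-degree part of the load-side polynomial. Its degree-$K$ contributions about $A=0$ are the monomials with $a+b=K$ from \eqref{eq:binom}. After centering, these reappear unchanged as the leading homogeneous part, since the lower-degree corrections cannot cancel a monomial of degree exactly $K$. This is the ``highest degree is still exactly $K$'' clause of point (iv).

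I expect the main obstacle to be genericity. The degree-$K$ coefficient at the extracted harmonic $h$ must not vanish. It is a product of the device's degree-$K$ coefficient, a binomial factor, and the drive-side harmonic amplitude $\ell=h-n$. I would argue that some $q$ and some $\ell$ give a nonzero product, since the drive supplies $\ell=h$ through its linear term, as in point (ii). I would then state the attainment as holding generically, excluding only coefficient cancellations.
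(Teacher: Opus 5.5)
Your proposal is correct and is the paper's argument made explicit: half-width $m$ from the spiral geometry, $m\le p$ with equality attainable via Corollary~\ref{cor:m} and the translation argument of point (iv), and the widest cluster equal to $K$ because centering leaves the top-degree homogeneous part unchanged. Two small repairs. First, clusters with the same $n$ are nested, so your non-overlap hypothesis can only be imposed between different $n$; this suffices, since the outermost line at each center has a single source. Second, in your genericity step the degree-$K$ term reaches harmonic $h$ only through drive-side harmonics $\ell\equiv h+K \pmod 2$, so for odd $K$ (including $K=3$) $\ell=h$ is never available; for $b_{2,1}$, nonvanishing must instead come from an even $\ell$, such as the DC part of the drive-side coefficient of $v_2^{K}$, which is still generically nonzero.
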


The half-width of the widest resolvable cluster thus gives $K$ by inspection of the
spectrum rather than by a fit, and Theorem~\ref{thm:rmax}
then gives $r_{\max}$: a model-order decision made by measurement.

\subsection{The published data}

Al-Rawachy \emph{et al.}~\cite{alrawachy2025} applied the tailored A-pull identification
of Section~\ref{sec:clusters} to a GaN device at $2.45$\,GHz and tabulated the extracted
coefficients of $b_{2,1}$. The first four
columns of Table~\ref{tab:alrawachy} are from~\cite{alrawachy2025}; the last two are computed here.

\begin{table}[!t]
\centering
\caption{The nine exponent pairs identified for $b_{2,1}$ on a GaN device at 2.45\,GHz.
$m$, $n$, mixing order and $|K|$ from Table~2 of~\cite{alrawachy2025} ($S_{am}=5$,
$S_{pm}=89$, $N=1069$); $r=(m-|n|)/2$ and the mixing order of Corollary~\ref{cor:mix}
($h=1$) computed here.}
\label{tab:alrawachy}
\begin{tabular}{rrcrrc}
\toprule
$m$ & $n$ & mix.\ order~\cite{alrawachy2025} & $|K(2,1,m,n)|$ & $r$ & $m+|n-h|$ \\
\midrule
0 & $0$  & 1 & 1.379 & 0 & 1 \\
1 & $1$  & 1 & 0.462 & 0 & 1 \\
1 & $-1$ & 3 & 0.075 & 0 & 3 \\
2 & $0$  & 3 & 0.050 & 1 & 3 \\
2 & $2$  & 3 & 0.039 & 0 & 3 \\
2 & $-2$ & 5 & 0.007 & 0 & 5 \\
3 & $1$  & 3 & 0.027 & 1 & 3 \\
3 & $-1$ & 5 & 0.015 & 1 & 5 \\
3 & $3$  & 5 & 0.008 & 0 & 5 \\
\bottomrule
\end{tabular}
\end{table}

Every pair has $r\le1$ and the largest magnitude exponent is $m=3$; no pair with $m\ge4$
appears, in particular neither $(4,0)$ nor $(5,1)$, the two $r=2$ pairs that~\cite{alrawachy2025} reports
searching for. By Corollary~\ref{cor:m} the detected load-side order is $K=3$, and
Theorem~\ref{thm:rmax} gives $r_{\max}=1$: the table agrees with the theorem's prediction row
by row. The last column, computed from Corollary~\ref{cor:mix} with no reference to
the data of~\cite{alrawachy2025}, reproduces the tabulated mixing order in all nine rows.

The same table gives the decay in $r$ empirically. At fixed $n$ the ratio of the $r=1$
to the $r=0$ coefficient is $0.050/1.379=0.036$ for $n=0$, $0.027/0.462=0.058$ for
$n=1$ and $0.015/0.075=0.20$ for $n=-1$. One further step at the $n=0$ rate puts the
first $r=2$ term, $(4,0)$, near $0.050\times0.036\approx1.8\times10^{-3}$ in the units
of Table~2 of~\cite{alrawachy2025}, i.e.\ $1.3\times10^{-3}$ relative to the $(0,0)$ coefficient, or about
$-58$\,dBc. The identification floor of~\cite{alrawachy2025} is a signal-to-spurious ratio near $-40$\,dBc
with a noise floor near $-60$\,dBc, so the predicted term lies between the two. The
closed-loop calculation of Section~\ref{sec:loaded} puts its $r=2$ term at
$1.1\times10^{-3}$, about $-60$\,dBc, at a load-wave amplitude half the drive. This is an
independent estimate that agrees with the extrapolation to within a factor of two, though
the figure comes from a synthetic device and scales as the fourth power of that amplitude.
Together the two estimates suggest an interpretation of $r\le1$ different from the usual
one: not that the device is intrinsically cubic on the load side, but that
fourth-order content is generated through the load at a level below the spurious floor
of the measurement and above its noise floor, where it cannot yet be detected.

If the spurious floor can be pushed toward the noise floor, the $(4,0)$ cluster should
be the first new term to appear, at roughly $-58$\,dBc, with $(5,1)$ the next $r=2$
term behind it; the magnitude is a one-step extrapolation, but the ordering is
structural, since any $r=2$ term needs load-side order at least four and $(4,0)$ is the
one that needs no more. If nothing appears down to $-60$\,dBc, $r\le1$ is confirmed
for that device by measurement rather than convention. Either outcome is informative,
and the measurement can be made with the apparatus described in Section~III of~\cite{alrawachy2025}.

\section{Two Simulations}
\label{sec:sims}

\subsection{A polynomial device}
\label{sec:toy}

The results can be illustrated on a synthetic device. Consider a Curtice-type transistor
in $Z_0=1$ units, $i_d=g\,G(v_1)F_K(v_2)$ with $G(v_1)=\tanh(1.5(0.2+v_1))$ on the
drive side and $F_K$ the degree-$K$ Taylor polynomial of $\tanh(1+v_2)$ on the load
side, so that $K$ is the intrinsic load-side order; close it through the load as in
Section~\ref{sec:loaded}; apply \eqref{eq:spiral} with the published numbers
$S_{am}=5$, $S_{pm}=89$, six cycles, $N=1069$ points and load-wave amplitude up to
half the drive; Fourier transform $b_{2,1}$ over the sequence; and call a pair $(m,n)$
detected when its outermost line $nS_{pm}\pm mS_{am}$ stands $3$\,dB above a chosen
floor. Fig.~\ref{fig:spectra}(a) is the spectrum for $K=3$; the cluster geometry is
that of Fig.~6 of~\cite{alrawachy2025}. Sweeping the floor gives
Table~\ref{tab:floorsweep}.

\begin{table}[!t]
\centering
\caption{Floor sweep on the cubic polynomial device, closed loop.}
\label{tab:floorsweep}
\begin{tabular}{lp{0.62\columnwidth}}
\toprule
floor & newly detected pairs \\
\midrule
$-40$\,dBc & $(0,0)$, $(1,1)$, $(2,0)$ \\
$-50$\,dBc & $+\,(1,-1)$, $(2,2)$ \\
$-60$\,dBc & $+\,(3,1)$, $(3,-1)$: seven pairs, all $r\le1$, max $m=3$, hence
             $K_{\rm eff}=3$, $r_{\max}=1$ \\
$-70$\,dBc & $+\,(3,3)$, $(4,0)$ at $-59.6$\,dBc, $(4,2)$ at $-61.1$\,dBc, $(5,1)$ at
             $-66.1$\,dBc, $(2,-2)$ \\
\bottomrule
\end{tabular}
\end{table}

At a $-60$\,dBc floor the synthetic device gives the same result as the measured one: seven of the
nine pairs of Table~\ref{tab:alrawachy}, every one with $r\le1$, largest magnitude
exponent three, and the inference $K_{\rm eff}=3$, $r_{\max}=1$. Ten decibels lower the
fourth-order content appears in the predicted order, $(4,0)$ leading, then the $r=1$
term $(4,2)$, then $(5,1)$. The polynomial device puts $(4,0)$ at $-59.6$\,dBc, on the noise floor;
the extrapolation from the measured coefficients put it at $-58$\,dBc. The agreement in level
is partly fortuitous, since the level of the polynomial device moves with load-wave amplitude and loop gain, but
the structure is not: the same run with the loop opened gives $(4,0)$ at $-300$\,dBc for
$K=3$ (a numerical zero, as Theorem~\ref{thm:rmax} requires) and at $-109$\,dBc for $K=4$. In this device
essentially all fourth-order content at the fundamental is generated through the load.

\begin{figure*}[!t]
\centering
\begin{tikzpicture}
\begin{axis}[
  width=0.5\textwidth, height=5.6cm,
  title={(a) polynomial device, $K=3$, closed loop},
  title style={font=\footnotesize},
  xlabel={bin index $n\,S_{pm}+r_j S_{am}$},
  ylabel={level re $(0,0)$, dBc},
  xmin=-230, xmax=230, ymin=0, ymax=96,
  ytick={0,20,40,60,80}, yticklabels={$-90$,$-70$,$-50$,$-30$,$-10$},
  xtick={-178,-89,0,89,178}, xticklabels={$-2S_{pm}$,$-S_{pm}$,$0$,$S_{pm}$,$2S_{pm}$},
  grid=major, grid style={black!12},
  tick label style={font=\scriptsize}, label style={font=\footnotesize},
  clip=true]
\addplot[ycomb, black, line width=0.5pt, mark=none] coordinates {
(-227,1.3) (-223,1.8) (-218,5.0) (-213,8.5) (-208,12.5) (-203,16.9) (-198,21.8) (-193,27.1) (-188,22.8) (-183,33.2) (-178,39.0) (-173,33.2) (-168,22.8) (-163,27.1) (-158,21.8) (-153,16.8) (-148,12.4) (-143,8.2) (-139,0.2) (-138,4.5) (-134,3.4) (-133,1.1) (-129,7.0) (-124,11.1) (-119,15.8) (-114,21.2) (-109,27.4) (-104,35.9) (-99,44.8) (-94,49.3) (-89,50.4) (-84,49.3) (-79,44.8) (-74,35.9) (-69,27.4) (-64,21.1) (-59,15.7) (-54,11.0) (-50,0.8) (-49,6.8) (-45,4.3) (-44,3.0) (-40,8.2) (-35,12.6) (-30,17.6) (-25,23.4) (-20,30.4) (-15,38.2) (-10,53.5) (-5,63.4) (0,90.0) (5,63.4) (10,53.5) (15,38.2) (20,30.4) (25,23.4) (30,17.6) (35,12.5) (39,0.9) (40,8.1) (44,4.4) (45,4.2) (49,8.4) (50,0.6) (54,12.9) (59,17.9) (64,23.9) (69,30.8) (74,40.7) (79,50.6) (84,78.4) (89,84.2) (94,78.4) (99,50.6) (104,40.7) (109,30.8) (114,23.9) (119,18.0) (124,12.9) (128,0.3) (129,8.4) (133,3.8) (134,4.5) (138,7.6) (139,0.9) (143,11.9) (148,16.8) (153,22.3) (158,28.9) (163,36.2) (168,49.0) (173,58.1) (178,61.1) (183,58.1) (188,49.0) (193,36.2) (198,28.9) (203,22.3) (208,16.8) (213,12.0) (218,7.7) (222,2.3) (223,3.9) (227,5.9) (228,0.5)
};
\addplot[black!60, dashed, thick] coordinates {(-230,50) (230,50)};
\addplot[black!60, dotted, thick] coordinates {(-230,30) (230,30)};
\node[font=\tiny, anchor=south west, fill=white, inner sep=1pt] at (axis cs:-228,50) {spurious floor $-40$\,dBc};
\node[font=\tiny, anchor=south west, fill=white, inner sep=1pt] at (axis cs:-228,30) {noise floor $-60$\,dBc};
\node[font=\tiny, anchor=south, fill=white, inner sep=1pt] at (axis cs:14,88)   {$(0,0)$};
\node[font=\tiny, anchor=south, fill=white, inner sep=1pt] at (axis cs:89,79)   {$(1,1)$};
\node[font=\tiny, anchor=south, fill=white, inner sep=1pt] at (axis cs:-89,57)  {$(1,-1)$};
\node[font=\tiny, anchor=south, fill=white, inner sep=1pt] at (axis cs:178,57)  {$(2,2)$};
\node[font=\tiny, anchor=south, fill=white, inner sep=1pt] at (axis cs:30,55)   {$(2,0)$};
\node[font=\tiny, anchor=south, fill=white, inner sep=1pt] at (axis cs:128,45)  {$(3,1)$};
\node[font=\tiny, anchor=south, fill=white, inner sep=1pt] at (axis cs:-60,37)  {$(3,-1)$};
\node[font=\tiny, anchor=south, fill=white, inner sep=1pt] at (axis cs:-40,26)  {$(4,0)$};
\end{axis}
\end{tikzpicture}%
\hfill
\begin{tikzpicture}
\begin{axis}[
  width=0.5\textwidth, height=5.6cm,
  title={(b) ASM-HEMT GaN model in ngspice, 0.8\,V drive},
  title style={font=\footnotesize},
  xlabel={bin index $n\,S_{pm}+r_j S_{am}$},
  ylabel={level re $(0,0)$, dBc},
  xmin=-230, xmax=230, ymin=0, ymax=106,
  ytick={0,20,40,60,80,100}, yticklabels={$-100$,$-80$,$-60$,$-40$,$-20$,$0$},
  xtick={-178,-89,0,89,178}, xticklabels={$-2S_{pm}$,$-S_{pm}$,$0$,$S_{pm}$,$2S_{pm}$},
  grid=major, grid style={black!12},
  tick label style={font=\scriptsize}, label style={font=\footnotesize},
  clip=true]
\addplot[ycomb, black, line width=0.5pt, mark=none] coordinates {
(-227,0.9) (-218,2.1) (-208,13.0) (-203,5.1) (-198,25.1) (-194,6.9) (-189,13.4) (-188,24.6) (-184,16.3) (-183,48.8) (-179,17.2) (-178,53.8) (-174,16.3) (-173,48.8) (-169,13.4) (-168,24.6) (-164,6.8) (-158,25.1) (-153,5.1) (-148,12.9) (-138,2.2) (-129,2.9) (-119,14.0) (-114,7.9) (-109,28.8) (-104,17.4) (-100,8.2) (-99,52.2) (-95,12.7) (-94,68.0) (-93,5.5) (-90,14.1) (-89,75.3) (-88,7.2) (-85,12.8) (-84,68.0) (-83,5.5) (-80,8.2) (-79,52.2) (-74,17.4) (-69,28.8) (-64,7.9) (-59,14.0) (-49,2.9) (-40,3.1) (-30,14.3) (-25,9.0) (-20,28.2) (-15,34.6) (-10,55.9) (-9,8.2) (-6,1.6) (-5,68.4) (-4,12.2) (-1,4.2) (0,100.0) (1,13.4) (4,1.7) (5,68.4) (6,12.2) (10,55.9) (11,8.2) (15,34.6) (20,28.2) (25,9.0) (30,14.3) (40,3.2) (49,2.8) (59,13.7) (64,11.6) (69,27.7) (74,31.8) (75,4.6) (78,1.1) (79,41.5) (80,10.5) (83,0.5) (84,77.4) (85,13.1) (89,83.3) (90,13.8) (93,0.4) (94,77.4) (95,13.1) (98,1.0) (99,41.5) (100,10.5) (103,0.1) (104,31.8) (105,4.6) (109,27.7) (114,11.5) (119,13.7) (129,2.8) (138,2.0) (148,12.2) (153,12.4) (158,24.5) (163,33.3) (164,5.0) (167,2.7) (168,49.2) (169,5.2) (172,4.7) (173,62.4) (177,5.3) (178,66.0) (182,4.7) (183,62.4) (187,2.6) (188,49.2) (189,5.1) (193,33.3) (194,5.0) (198,24.5) (203,12.4) (208,12.2) (218,2.0) (227,0.3)
};
\addplot[black!60, dashed, thick] coordinates {(-230,60) (230,60)};
\addplot[black!60, dotted, thick] coordinates {(-230,40) (230,40)};
\node[font=\tiny, anchor=south west, fill=white, inner sep=1pt] at (axis cs:-228,60) {spurious floor $-40$\,dBc};
\node[font=\tiny, anchor=south west, fill=white, inner sep=1pt] at (axis cs:-228,40) {noise floor $-60$\,dBc};
\node[font=\tiny, anchor=south, fill=white, inner sep=1pt] at (axis cs:14,99)   {$(0,0)$};
\node[font=\tiny, anchor=south, fill=white, inner sep=1pt] at (axis cs:89,80)   {$(1,1)$};
\node[font=\tiny, anchor=south, fill=white, inner sep=1pt] at (axis cs:-89,71)  {$(1,-1)$};
\node[font=\tiny, anchor=south, fill=white, inner sep=1pt] at (axis cs:30,58)   {$(2,0)$};
\node[font=\tiny, anchor=south, fill=white, inner sep=1pt] at (axis cs:178,52)  {$(2,2)$};
\node[font=\tiny, anchor=south, fill=white, inner sep=1pt] at (axis cs:125,35)  {$(3,1)$};
\node[font=\tiny, anchor=south, fill=white, inner sep=1pt] at (axis cs:-40,31)  {$(4,0)$};
\end{axis}
\end{tikzpicture}
\caption{Simulated tailored A-pull spectra of $b_{2,1}$, $|k|\le230$ shown, with the two
floors of~\cite{alrawachy2025} drawn in. Each cluster is centered at $nS_{pm}$ and has
half-width $mS_{am}$, $S_{am}=5$, $S_{pm}=89$. (a) The cubic polynomial device of
Section~\ref{sec:toy}: the $(2,0)$ cluster spans $\pm10$ bins about zero, the $(3,1)$
cluster $\pm15$ about 89; the $(4,0)$ line at $\pm20$ bins, $-59.6$\,dBc, is the first
fourth-order content. (b) The ASM-HEMT model of Section~\ref{sec:asmhemt} at the softer
drive: bins belonging to no cluster are below $-80$\,dBc, so everything visible is
device response.}
\label{fig:spectra}
\end{figure*}

\subsection{A compact transistor model in a circuit simulator}
\label{sec:asmhemt}

The polynomial device has the right structure but is not a realistic transistor. The same
experiment was therefore run on a physics-based compact model in an open-source
simulator: the ASM-HEMT GaN HEMT model (version 101.4.0 of the model described
in~\cite{khandelwal2018asmhemt};
Verilog-A), compiled with OpenVAF and run in ngspice~44 through its OSDI interface. The
default parameter set describes a $0.8$\,mm device with $V_{\rm off}=-2$\,V; it was
biased at $V_{gs}=-1.5$\,V (class AB, about $170$\,mA) with about $20$\,V on the drain,
driven at $2.45$\,GHz from a $50\,\Omega$ source into a $50\,\Omega$ environment.
Self-heating and trapping were switched off so that the device is memoryless, as
Theorem~\ref{thm:rmax} assumes, and reaches its CW steady state within a few periods. A-pull is simple in simulation, as Al-Rawachy \emph{et al.} note: a Thevenin
source $2\sqrt{Z_0}\Rel(A_2e^{j\omega t})$ behind $Z_0$ at the drain sets the incident
wave $a_2$ exactly, whatever the device does. The maximum-power load was found by a scan,
the spiral \eqref{eq:spiral} wound about it with $S_{am}=5$, $S_{pm}=89$, six cycles and
$N=1069$ points, its radius enclosing roughly the $3$\,dB output-power contour, the
criterion of~\cite{alrawachy2025}. Each point is a short transient (eight periods, the last four
analyzed) at a fixed step of $1024$ per period.\footnote{The netlists, scripts and both datasets, together with the MATLAB
routines that reproduce every table and figure of this paper, are archived
in~\cite{toolkit}.} Fig.~\ref{fig:spectra}(b) is the spectrum at a drive of $0.8$\,V EMF, about
$1$--$2$\,dB into compression, and Table~\ref{tab:asmhemt} collects the outermost-line
levels at two drives, arranged by $n$ so that the decay in $r$ can be read along each
row.

\begin{table}[!t]
\centering
\caption{ASM-HEMT: outermost-line level of each $(m,n)$ cluster, dBc relative to
$(0,0)$, at two drives. Along a row $m$ increases by two, i.e.\ $r$ by one. The numerical
floor (largest off-cluster bin) is $-80$\,dBc at the softer drive and $-107$\,dBc at the
harder.}
\label{tab:asmhemt}
\setlength{\tabcolsep}{2.6pt}
\begin{tabular}{lrrrrcrrrr}
\toprule
& \multicolumn{4}{c}{0.8\,V (1--2\,dB compr.)} & &
  \multicolumn{4}{c}{1.2\,V (3--4\,dB compr.)} \\
\cmidrule{2-5}\cmidrule{7-10}
$n$ & $m{=}|n|$ & $|n|{+}2$ & $|n|{+}4$ & $|n|{+}6$ & & $m{=}|n|$ & $|n|{+}2$ & $|n|{+}4$ & $|n|{+}6$ \\
\midrule
$0$  & $0.0$   & $-44.1$ & $-71.8$ & $-85.7$  & & $0.0$   & $-51.8$ & $-97.6$  & $-130$ \\
$+1$ & $-22.6$ & $-68.2$ & $-88.4$ & $-107$   & & $-27.2$ & $-81.1$ & $-116$   & $-141$ \\
$-1$ & $-32.0$ & $-82.6$ & $-92.1$ & $-112$   & & $-30.8$ & $-71.8$ & $-113$   & $-139$ \\
$+2$ & $-50.8$ & $-75.5$ & $-87.8$ & $-98.0$  & & $-57.2$ & $-100$  & $-131$   & $-145$ \\
$-2$ & $-75.4$ & $-74.9$ & $-87.0$ & $-97.8$  & & $-56.8$ & $-91.3$ & $-131$   & $-148$ \\
$\pm3$ & $-74$  & $-87$   & $-101$  &          & & $-80$   & $-110$  & $-141$   & \\
\bottomrule
\end{tabular}
\end{table}

Four points can be made about the ASM-HEMT spectra. (i) The spectrum is a Cardiff decomposition: every line above the numerical floor sits on a cluster position \eqref{eq:cluster},
and the off-cluster bins are empty to $-80$\,dBc and $-107$\,dBc. That is
Theorem~\ref{thm:complete} seen in a circuit simulator with a physics-based model.
(ii) At the published floors this device gives $K_{\rm eff}=2$, $r_{\max}=1$: with a
$-60$\,dBc floor five pairs are detected at either drive, $(0,0)$, $(1,\pm1)$, $(2,0)$,
$(2,\pm2)$, all with $r\le1$ and none beyond $m=2$. The default ASM-HEMT has a weaker
load-side nonlinearity than the device in~\cite{alrawachy2025}, whose $m=3$ terms lie near
$-35$ to $-45$\,dBc; here they are at $-68$\,dBc and below. (iii) $(4,0)$ is the strongest
$r=2$ term at both drives, as Section~\ref{sec:apull} argued: $-71.8$\,dBc against $-88$\,dBc
for the next at the softer drive, $-97.6$\,dBc against $-113$\,dBc at the harder. It is also the
strongest $m\ge4$ line at the softer drive, while at the harder drive the $r=1$ term
$(4,-2)$ slightly exceeds it, the interleaving the polynomial device also showed; the
ordering claim holds for $r$ and is a tendency for $m$. (iv) The decay in $r$ is not
geometric, and the first step is the steepest: along $n=0$ the steps are $-44$, $-28$,
$-14$\,dB at the softer drive and $-52$, $-46$, $-33$\,dB at the harder. A one-step
extrapolation from the first ratio, as done above with the measured coefficients, would
put $(4,0)$ at $-88$\,dBc and $-104$\,dBc where the model has it at $-72$\,dBc and $-98$\,dBc; the
extrapolation underestimates the level, by 16\,dB and 6\,dB. For the measured device the $-58$\,dBc estimate
is therefore more likely pessimistic than optimistic, which makes the proposed
measurement easier.

A run-to-run jitter of a few milliradians in the phase reference of the phasor
extraction produces a flat floor of lines at $-55$ to $-70$\,dBc, on cluster positions
and off them alike, under which every $m\ge3$ term is buried: the spurious floor of a
phase-referenced measurement. The level of the bins that belong to no cluster is
therefore the first quantity to check; those bins must be empty, and nothing below
their level should be read as device response. The DC drain current, read the same way,
has clusters at $n$ and $-n$ of identical level ($(1,\pm1)$ at $-29.0$\,dBc, $(2,\pm2)$ at $-62.2$\,dBc), the
statement that a weight-zero equivariant of a real quantity has conjugate-symmetric
coefficients.

\section{Discussion}

Three questions in the practice of behavioral modeling are usually answered by
convention. The results above answer two of them quantitatively: how many coefficients
exist at a given order (the Molien series, Appendix~\ref{app:molien}, and the counts of
Section~\ref{sec:exponents}), and at what $r$ to truncate (Theorem~\ref{thm:rmax} for
an intrinsic nonlinearity, the decay rule of Section~\ref{sec:loaded} for a loaded
device). The third --- how many harmonics to keep --- is addressed by the roll-off theory
of linear time-periodic systems~\cite{sandberg2005}, since the linearization of
\eqref{eq:complete} about a drive is a harmonic transfer matrix~\cite{wereley1990}. In
that theory the entries of the harmonic transfer matrix decay in both of its directions.
The decay with harmonic index is set by the smoothness of the periodic time variation,
the decay with frequency by the leading Markov parameters, and the theory bounds the
error of the truncated matrix. The number of harmonics to keep is therefore a measurable
property of the driven device, like the decay in $r$ here. A fourth question, how many memory states a dynamic model
needs, is the subject of the companion paper~\cite{tufillaro2026memory}.

The recommendations for measurement are as follows. Convert Cardiff exponents to $(a,b)$ and take
$r=\min(a,b)$. Interpret $m$ as an order: the largest $m$ resolved is the load-side
nonlinearity order at that drive and mismatch. Treat $r\le1$ as a hypothesis about the
device, exact for $K\le3$; if $r=2$ terms are needed to fit the data, the device has
fourth-order load-side content. For a loaded device plot the
coefficients against $r$ on a log axis and truncate where the line meets the floor,
reporting the slope. With A-pull apparatus, measure the cluster widths, which give $K$
by inspection of the spectrum rather than by a fit, and check the off-cluster bins first. Finally, do not attribute a persistent CW residual to a missing term of the same
kind: the representation \eqref{eq:complete} is complete for time-invariant single-tone steady state, so
such a residual is another tone, another port, harmonic injection, or a measurement
that is not in steady state.

The scope of this paper is fundamental load-pull under single-tone (CW) excitation,
periodic steady state and, for Theorem~\ref{thm:rmax}, a memoryless load-side
nonlinearity. The multi-harmonic and multi-port cases require the same argument with a torus in
place of the circle and more indices to track; they are extensions of the present
results. Everything
quantitative here rests on simulation and on one published table; the measurement that
would settle the interpretation of $r\le1$ --- a fundamental load-pull sweep at one bias and
drive with the spurious floor pushed toward the noise floor --- is within reach of the
apparatus described in Section~III of~\cite{alrawachy2025}.

\appendices

\section{Molien Series and the Hironaka Decomposition}
\label{app:molien}

For a circle acting on $\mathbb C^{N}$ with weights $w_1,\dots,w_N$ (and on the
conjugates with $-w_i$), the number of invariant polynomials of degree $d$ is the
coefficient of $t^{d}$ in
\begin{equation}
  M(t)=\bigl[\lambda^{0}\bigr]\prod_i\frac{1}{(1-t\lambda^{w_i})(1-t\lambda^{-w_i})},
\end{equation}
where $[\lambda^{0}]$ extracts the constant term in $\lambda$ (the Molien--Weyl
integral over the circle); replacing $\lambda^{0}$ by $\lambda^{h}$ counts the
weight-$h$ equivariants. For the two weight-one variables $\Aone,\Atwo$,
\begin{equation}
  M(t)=\frac{1+t^{2}}{(1-t^{2})^{3}} .
\end{equation}
The denominator says there are three algebraically independent invariants of degree
two, $|\Aone|^{2}$, $|\Atwo|^{2}$ and $\Rel(\Atwo\Ab_{1,1})$, in which arbitrary
polynomials may be formed; the numerator says there is one more, $\Iml(\Atwo\Ab_{1,1})$,
appearing at most linearly because its square is a polynomial in the other three
($(\Iml)^2=|\Aone|^2|\Atwo|^2-(\Rel)^2$). Every invariant is uniquely
$p+q\,\Iml(\Atwo\Ab_{1,1})$ with $p,q$ polynomials in the three primaries: the invariant
ring is a free module of rank two over the polynomial ring in the primaries (the Hironaka decomposition into
primary and secondary invariants~\cite{sturmfels1993,derksen2015}; a readable short
account, aimed at physicists, is~\cite{demellokoch2026}). The weight-$h$ equivariants
form a module over this ring generated, for $h=1$, by $\Aone$ and $\Atwo$, and for
general $h$ by the monomials $\Aone^{h-j}\Atwo^{j}$; the Cardiff basis
\eqref{eq:cardiff} is a generating set of this module written in the gauge $P$,
found experimentally one harmonic at a time. Numerically the completeness of
\eqref{eq:cardiff} at each degree is the statement that the design matrix of weight-$h$
monomials evaluated at random sample points has rank equal to the Molien count, which
the routines of~\cite{toolkit} verify.

\section{An Elementary Derivation of the Bound on $r$}
\label{app:noalgebra}

A device whose load-side nonlinearity is a polynomial of degree $K$ produces terms in
which the load-side wave appears at most $K$ times, each appearance $A$ or $\Ab$. Write
a term as $A^{a}\Ab^{b}$; then $a+b\le K$. The Cardiff model writes the same term as
$|A|^{m}(\angle A)^{n}$ with $m=|n|+2r$; matching gives $m=a+b$, $n=a-b$,
$r=\min(a,b)$. A term with $r$ conjugate pairs needs at least $2r$ factors, so
$2r\le a+b\le K$ and $r\le\lfloor K/2\rfloor$. For $K=3$ this is $r\le1$; for $K=4$ it is
$r\le2$, and the new term is $A^{2}\Ab^{2}=|A|^{4}$.

\section*{Acknowledgment}

Simulation scripts and the numerical checks in this paper were prepared with the
assistance of Claude (Anthropic) under the author's direction, as well as grammatical
checks of the text; the author verified the results and is responsible for the content.


\end{document}